\newtheorem{remark}{Remark}
\newtheorem{theorem}{Theorem}
\newtheorem{lemma}{Lemma}
\newtheorem{corollary}{Corollary}
\newtheorem{assumption}{Assumption}
\begin{document}
\title{UAV-Aided Multi-Way NOMA Networks with Residual Hardware Impairments}

\author{Xingwang~Li,~\IEEEmembership{Senior Member,~IEEE,}
        Qunshu~Wang,~\IEEEmembership{Student Member,~IEEE,}
         Yuanwei~Liu,~\IEEEmembership{Senior Member,~IEEE,}
         Theodoros~A.~Tsiftsis,~\IEEEmembership{Senior,~IEEE,}
         Zhiguo~Ding,~\IEEEmembership{Fellow,~IEEE,}
        and Arumugam~Nallanathan,~\IEEEmembership{Fellow,~IEEE}
\thanks{X. Li and Q. Wang are with the School of Physics and Electronic Information Engineering, Henan Polytechnic University, Jiaozuo, China (email:lixingwangbupt@gmail.com, 15333766153@163.com).}
\thanks{Y. Liu and A. Nallanathanis are with the School of Electronic Engineering and Computer Science, Queen Mary University of London, London, UK (email:\{yuanwei.liu, a.nallanathan\}@qmul.ac.uk).}
\thanks{T. A. Tsiftsis is with the School of Intelligent Systems Science and
Engineering, Jinan University Zhuhai Campus, Zhuhai 519070, China (email: theo\_tsiftsis@jnu.edu.cn).}
\thanks{Z. Ding is with the School of Electrical and Electronic Engineering, The University of Manchester, Manchester, UK (email: zhiguo.ding@manchester.ac.uk).}
}
\maketitle
\nocite{Li2018}
\begin{abstract}
In this paper, we study an unmanned aerial vehicle (UAV)-aided non-orthogonal multiple access (NOMA) multi-way relaying networks (MWRNs). Multiple terrestrial users aim to exchange their mutual information via an amplify-and-forward (AF) UAV relay. Specifically, the realistic assumption of the residual hardware impairments (RHIs) at the transceivers is taken into account. To evaluate the performance of the considered networks, we derive the analytical expressions for the achievable sum-rate (ASR). In addition, we carry out the asymptotic analysis by invoking the affine expansion of the ASR in terms of \emph{high signal-to-noise ratio (SNR) slope} and \emph{high SNR power offset}. Numerical results show that: 1) Compared with orthogonal multiple access (OMA), the proposed networks can significantly improve the ASR since it can reduce the time slots from $\left[ {\left( {M - 1} \right)/2} \right] + 1$ to 2; and 2) RHIs of both transmitter and receiver have the same effects on the ASR of the considered networks.

\end{abstract}

\begin{IEEEkeywords}
Hardware impairments, multi-way relay, NOMA, UAV
\end{IEEEkeywords}

\section{Introduction}
\IEEEPARstart {N}{on}-orthogonal multiple access (NOMA) has attracted significant attentions from academia and industry for the fifth generation (5G) and beyond mobile communication networks since it can provide high spectrum efficiency (SE), massive connectivity and user fairness \cite{Z.Ding.2017}. Compared with current orthogonal multiple access (OMA), the main advantage of NOMA is that it can allow multiple users to access the same time/frequency resources by power multiplexing\textcolor[rgb]{0.00,0.00,1.00}{ \footnote{Normally, NOMA can be classified into  two categories, namely code-domain NOMA and power-domain NOMA. The analysis of this paper is based on power-domain NOMA.}}. In addition, NOMA can ensure user fairness by allocating more power to the weak channel condition users \cite{Li.X2019}.

On a parallel avenue, unmanned aerial vehicle (UAV) communication has been in the limelight for the important supplement of current terrestrial communication because of flexibility and higher maneuverability \cite{8660516LiuTWC},\cite{8710357Zhang}. Contrary to the terrestrial wireless networks, UAV can provide emergency communication service in the cases of natural disaster when the terrestrial networks are partially or entirely malfunctioning \cite{8660516Mozaffari}. Moreover, UAV can be acted as a UAV-mounted base station (BS) to provide short-term erratic high rate wireless connectivity of the terrestrial users or a relay to enhance the quality-of-service (QoS) of cell-edge users. Compared with the terrestrial BS or relay, the most prominent advantage of the UAV-mounted BS or relay is that it can rapidly be deployed in any place\cite{7486987Mozaffari}. In addition, UAV-aided relay can obviously provide wider coverage range than other relays. These benefits can be achieved since it is not only ease of deployment but also high QoS caused by line-of-sight (LoS) links \cite{8951059XingwangIA}.

To reap the benefits of both NOMA and UAV, NOMA has been introduced to improve the performance of UAV-enabled networks, which has sparked a great deal of research interests \cite{8951059XingwangIA,8269066Sharma,8488592Hou,8663350ZhiguoDing}. Considering the satellite-terrestrial communication, authors in \cite{8951059XingwangIA} proposed a unified framework of hybrid satellite/UAV-assisted NOMA networks, where the optimal location scheme of a UAV was designed. The UAV acting as the flying BS to communicate with two ground NOMA users, authors of \cite{8269066Sharma} highlighted the outage comparsion of the NOMA and OMA schemes. Exploiting stochastic geometry, authors in \cite{8488592Hou} proposed a 3-D UAV framework for randomly roaming NOMA users, and the outage probability (OP) and the ergodic rate of the proposed framework were derived. Liu \emph{et al.} in \cite{8663350ZhiguoDing} proposed a joint optimization scheme for the placement and power allocation to maximize the total ASR of NOMA-UAV network.
\begin{figure}[!t]
\setlength{\abovecaptionskip}{0pt}
\centering
\includegraphics [width=2.6in]{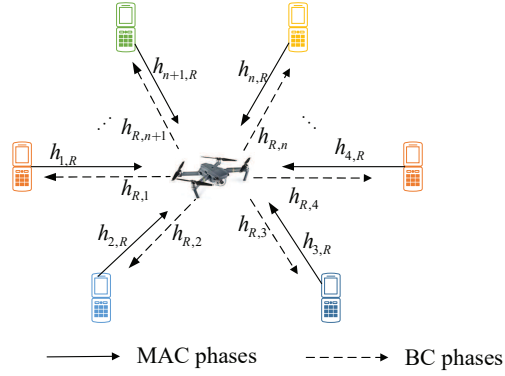}
\caption{System model}
\label{fig1}
\end{figure}

Meanwhile, multi-way relaying protocol is considered as another promising technology to further improve SE. For multi-way relaying protocol, a set of users aim to exchange their mutual information with the aid of a relay \cite{J.Xue2015}. In \cite{G.Amarasuriya2012}, Amarasuriya \emph{et al.} proposed a pairwise amplify-and-forward (AF) multi-way relay networks (MWRNs) and evaluated the OP and the average bit-error rate of the considered network. \cite{6809883Sharifian} investigated the performance of complex field network coding in MWRNs and considered full data exchange algorithm in order to improve the throughput. In \cite{C.D.Ho2018}, the combination between MWRNs and massive multiple-input multiple-output (MIMO) technology was accomplished and a novel strategy for massive MIMO systems with successive cancelation decoding was designed. The common characteristic of the aforementioned works is based on OMA, which needs more time slots to complete the information exchange. To further improve SE, Silva \emph{et al.} in \cite{S.Silva2019} employed NOMA technology in multi-way massive MIMO relay networks, which shows that it does not only exchange information among users, but also reduce the processing time. Thus, it is of profound importance to investigate the performance of UAV-aided NOMA MWRNs.


Motivated by the above discussion, we investigate the performance of a UAV-aided NOMA MWRNs. More practical, we assume that all nodes suffer from the effects of residual hardware impairments (RHIs), which caused by time-varying hardware characteristics and imperfect compensation algorithms \cite{X.Li2019},\cite{8892558Li}. To the best of our knowledge, a detailed analysis of UAV-assisted NOMA MWRNs is missing in the literature. To fill this gap, we analyze the impact of RHIs on the performance of UAV-aided NOMA MWRNs by deriving the approximate analytical expressions for the  achievable sum rate (ASR). To obtain more insights, the asymptotic analysis in the high signal-to-noise-ratio (SNR) regions is carried out by invoking \emph{high SNR slope} and \emph{high SNR power offset}.

\section{System Model}\label{sec2}


We consider a UAV-aided NOMA MWRN as illustrated in Fig. 1, where $M$ ground users mutually exchange their information with the aid of an AF UAV relay $R$.
We assume that: 1) each node is equipped with a single antenna; 2) both UAV and NOMA users operate in the half-duplex mode; 3) the direct links among all NOMA users are deemed to nonexistent owing to severe shadowing fading, and 4) this system operates in time division duplex (TDD). This means that there is a reciprocity between uplink and downlink channels.

The channel between the $i$-th user ($i = 1,2, \ldots ,M$) and UAV is denoted by ${g_i}$, and ${\left| {{g_i}} \right|^2} = \frac{{{{\left| {{h_i}} \right|}^2}}}{{1 + d_i^\upsilon }}$, where ${d_i}$ is the distance between $R$ and ${S_i}$, and $\nu $ is the path-loss component. We assume that ${h_i}$ follows the general Nakagami-$m$ distribution. Without loss of generality, it is assumed that ${\left| {{g_1}} \right|^2} \le {\left| {{g_2}} \right|^2} \le   \ldots  \le {\left| {{g_M}} \right|^2}$. The probability density function (PDF) and the cumulative distribution function (CDF) of ${\left| {{h_i}} \right|^2}$ are assumed to be the same as in \cite{7870605Yue}.

The entire communication process can be divided into two phases:  multiple-access (MAC) and broadcast (BC).

\emph{MAC:} Each user $S_i$, $i = 1,2, \ldots ,M$, sends its signal $t = \sqrt {{a_i}{P}}{x_i} $ to $R$, where ${{P}}$ is the allowable maximum
transmit power for each user;\footnote{In this study, we assume that all users have the equal maximum power. Note that some power optimization schemes are capable of further improving the system performance, though this is part of our ongoing work.} ${x_i}$ is the transmitted signal from the $i$-th user with $E\left\{ {{{\left| {{x_i}} \right|}^2}} \right\} = 1$, where $E\{\cdot\}$ denotes the expectation operator of random variables; ${{a_i}}$ denotes power allocation coefficient between the UAV and $i$-th user with ${a_1} + {a_2} + \ldots  + {a_M} = 1$ and ${a_1} > a_2 > \ldots  > {a_M}$. Thus, the received superposed signal of $R$ can be expressed as
\begin{align}\label{5}
{y_r} = \sum\limits_{i = 1}^M {{g_i}\left( {{x_i}\sqrt {{a_i}{P}}  + {\eta _{{u_{t,i}}}}} \right)}  + {\eta _{Rr}} + {N_r},
\end{align}
where ${\eta _{{u_{t,i}}}}$ and ${\eta _{Rr}}$ are the distortion noises from the transmitter and receiver with
${\eta _{{u_{t,i}}}} \sim \mathcal{CN}\left( {0,\kappa _{{u_{t,i}}}^2{a_i}P} \right)$ and ${\eta _{Rr}} \sim \mathcal{CN}\left( {0,\kappa _{Rr}^2\sum_{i = 1}^M {{{\left| {{g_i}} \right|}^2}{a_i}} P} \right)$, respectively; ${N_r}\sim \mathcal{CN}\left( {0,\sigma _r^2}\right)$ is the additive complex white Gaussian noise (AWGN).

\emph{BC:} $R$ amplifies and forwards received superposed signal to all users simultaneously. Thus, the received signal at the $k$-th user is given as
\begin{align}\label{6}
{y_{r,k}} = {g_k}\left( {G{y_r} + {\eta _{Rt}}} \right) + {\eta _{{u_{r,k}}}} + {N_{t,k}},
\end{align}
where $G = \sqrt {{P_R}/\left( {\sum\nolimits_{i = 1}^M {{\rho _i}{a_i}P\left( {1 + \kappa _{{u_{t,i}}}^2 + \kappa _{Rr}^2} \right)}  + \sigma _r^2} \right)} $ is the normalization coefficient and ${{P_R}}$ is the transmitted power at $R$; ${\rho _i} = {\left| {{g_i}} \right|^2}$ is the channel gain; ${\eta _{{u_{r,k}}}}$ and ${\eta _{Rt}}$ represent the distortion noises from the receiver and transmitter with ${\eta _{{u_{r,k}}}} \sim \mathcal{CN}\left( {0,\kappa _{{u_{r,k}}}^2{\rho _k}{P_R}} \right)$ and ${\eta _{Rt}} \sim \mathcal{CN}\left( {0,\kappa _{Rt}^2{P_R}} \right)$, respectively; ${N_{t,k}} \sim \mathcal{CN}\left( {0,\sigma _{t, k}^2} \right)$ denotes the AWGN. In this phase, the decoding order is carried out from user with a worse channel condition to user with a better channel condition.

According to \cite{S.Silva2019}, the received signal at ${S_k}$ when ${S_k}$ decodes $n$-th user after decoding ${n-1}$ users should be divided into two cases:
\subsubsection{The first case}
When $n \ge k$, the received signal at ${S_k}$ can be written as
\begin{align}\nonumber
&{y_{k,n}} \!=\! {g_k}G{g_{n + 1}}{x_{n + 1}}\sqrt {{a_{n + 1}}{P}} \! +\! {g_k}G\sum\limits_{i = n + 2}^M {{g_i}{x_i}\sqrt {{a_i}{P}} } \\
& \!+\! {g_k}G\!\sum\limits_{i = 1}^M {{g_i}{\eta _{{u_{t,i}}}} \!\!\! + \!{g_k}G\left({\eta _{Rr}} \!+ \!{N_r}\right) \!+\! {g_k}{\eta _{Rt}}\! + \!{\eta _{{u_{r,k}}}} \!\!\!+\! {N_{t,k}}}.
\end{align}

For this case, ${S_k}$ cannot successfully decode ${S_n}$ and the received signal-to-interference-plus-noise ratio (SINR) is equal to zero. Obviously, the ASR of this case is equal to zero, which is omitted in the sequel.
\subsubsection{The second case}
When $n < k$, the received signal at ${S_k}$ can be written as
\begin{align}\nonumber
{\noindent y_{k,n}} = &{g_k}G{g_n}{x_n}\sqrt {{a_n}{P}}  + {g_k}G\sum\limits_{i = n + 1}^{M - 1} {{g_i}{x_i}\sqrt {{a_i}{P}} } + {g_k}G{N_r}\\
&+ {g_k}G\sum\limits_{i = 1}^M {{g_i}{\eta _{{u_{t,i}}}} \!\!\! + \!{g_k}G{\eta _{Rr}} \! +\! {g_k}{\eta _{Rt}}\! +\! {\eta _{{u_{r,k}}}}\!\!\! + \!{N_{t,k}}}.
\end{align}

For this case, ${S_k}$ can successfully decode ${S_n}$ and the received SINR is expressed as
\begin{align}\label{9}
\gamma_{{k,n}} = \frac{{{\rho _k}{\rho _n}{a_n}{r_1}{r_2}}}{{{\Theta _1} + {\Theta _2} + {\Theta _3} + {\Theta _4} + {\Theta _5}}},
\end{align}
where ${r_1}\! = \!\frac{{{P}}}{{\sigma _r^2}}$ and ${r_2} \!=\! \frac{{{P_R}}}{{\sigma _{t,k}^2}}$ are the transmit SNRs at NOMA users and $R$, respectively; ${\Theta _1} = \sum_{i = n + 1}^{M - 1} {{\rho _k}{\rho _i}{a_i}{r_1}{r_2}} $,\\ ${\Theta _2} = \sum\nolimits_{i = 1}^M\! {{\rho _k}{\rho _i}{r_1}{r_2}\left( {{a_i}\kappa _{{u_{t,i}}}^2\!\!\! +\!  {a_i}\kappa _{Rr}^2\! +\! \kappa _{Rt}^2\kappa _{Rr}^2\! +\!\! \kappa _{Rt}^2{a_i}\! +\!\!
{a_i}\kappa _{Rt}^2\kappa _{{u_{t,i}}}^2} \right)} $,\\ ${\Theta _3}\! \!=\! \sum\nolimits_{i = 1}^M\! {{a_i}\rho _i^2\kappa _{{u_{r,k}}}^2\!\!{r_1}{r_2}\! +\!\! {a_i}\rho _i^2\kappa _{{u_{r,k}}}^2\!\!{r_1}{r_2}\kappa _{{u_{t,i}}}^2\!\!\! +\!\! {a_i}\rho _i^2\kappa _{{u_{r,k}}}^2{r_1}{r_2}\kappa _{Rr}^2} $,\\ ${\Theta _4} = \sum\nolimits_{i = 1}^M {\left( {{\rho _i}{r_1}{a_i} + {\rho _i}{r_1}{a_i}\kappa _{{u_{t,i}}}^2 + {\rho _i}{r_1}{a_i}\kappa _{Rr}^2 + \kappa _{{u_{r,k}}}^2{\rho _i}{r_2}} \right)} $,\\ ${\Theta _5} = {\rho _k}{r_2} \!+\! {\rho _k}\kappa _{Rt}^2{r_2} \!+\!\! 1$.

\section{Achievable Sum Rate Analysis}
In this section, we investigate the analytical expression of the ASR of UAV-aided NOMA MWRNs. Then the asymptotic analysis is explored in the high SNR regime.
\subsection{Achievable Sum Rate Analysis}
Based on (\ref{9}), the achievable rate of $k$-th user is given as
\begin{align}\label{10}
{C_{k,n}}= E\left[ {\frac{1}{2}\log_2\left( {1 + \gamma_{{k,n}}} \right)} \right],
\end{align}
where the constant $1/2$ represents that the communication process is completed into two time slots. The following theorem presents the achievable rate of the considered system in the presence of either ideal or non-ideal conditions.

\begin{theorem}
For ideal/non-ideal conditions, we have

$\bullet$ Non-ideal conditions ($\kappa _{{u_{t,i}}} \neq 0, \kappa _{{u_{r,i}}} \neq 0,  \kappa _{Rr} \neq 0, \kappa _{Rt} \neq 0, 0 \leq i \leq M$)
\begin{align}\label{11}
C_{k,n}^{ni} = \frac{1}{2}\log 2\left( {1 + \frac{{{\Psi _k}{\Psi _n}{a_n}{r_1}{r_2}}}{{{\Xi _1} + {\Xi _2} + {\Xi _3} + {\Xi _4} + {\Xi _5}}}} \right),
\end{align}
where ${\Xi _1} = \sum_{i = n + 1}^{M - 1} {{\Psi _k}{\Psi _i}{a_i}{r_1}{r_2}}$, \\ ${\Xi _2} = \sum\limits_{i = 1}^M {{\Psi _k}{\Psi _i}\kappa _{{u_{t,i}}}^2{a_i}{r_1}{r_2}} + \kappa _{Rr}^2\sum\limits_{i = 1}^M {{\Psi _k}{\Psi _i}{a_i}{r_1}{r_2}} $,\\ ${\Xi _3}\! =\! \sum\limits_{i = 1}^M {{\Psi _k}{\Psi _i}\kappa _{Rt}^2{r_1}{r_2}\left( {{a_i} + {a_i}\kappa _{{u_{t,i}}}^2} \right)} + \sum\limits_{i = 1}^M {{\Psi _k}{\Psi _i}\kappa _{Rt}^2\kappa _{Rr}^2{a_i}{r_1}{r_2}} $,\\ ${\Xi _4} = \sum\limits_{i = 1}^M {{\Omega _i}\kappa _{{u_{r,k}}}^2{r_1}{r_2}\left( {{a_i} + {a_i}\kappa _{{u_{t,i}}}^2} \right)} + \sum\limits_{i = 1}^M {{\Omega _i}\kappa _{{u_{r,k}}}^2\kappa _{Rr}^2{a_i}{r_1}{r_2}}  + \sum\limits_{i = 1}^M {{\Psi _i}{r_1}\left( {{a_i} + {a_i}\kappa _{{u_{t,i}}}^2} \right)}  + \sum\limits_{i = 1}^M {{\Psi _i}\kappa _{Rr}^2{a_i}{r_1}}  $,\\ ${\Xi _5} = \sum_{i = 1}^M {\kappa _{{u_{r,k}}}^2{\Psi _i}{r_2}}  + {\Psi _k}{r_2} + {\Psi _k}\kappa _{Rt}^2{r_2} + 1$,\\  ${\Psi _i}$ and ${\Omega _i}$ are given as
\begin{align}
{\Psi _i} =& \frac{1}{{1 + d_i^\nu }}\frac{1}{{\Gamma \left( {{\alpha _i}} \right)\beta _i^{{\alpha _i}}}}\sum\limits_{n = 0}^{m - 1} {\left( {\begin{array}{*{20}{c}}
{m - 1}\\
n
\end{array}} \right)} {\left( { - 1} \right)^n}\sum\limits_{{p_0} +  \ldots  + {p_{{\alpha _i} - 1}} = n + M - m} {\left( {\begin{array}{*{20}{c}}
{n + M - m}\\
{{p_0}, \ldots ,{p_{{\alpha _i} - 1}}}
\end{array}} \right)} \\ \nonumber
&\frac{{M!}}{{\left( {m - 1} \right)!\left( {M - m} \right)!}}{\prod\limits_{{g_i} = 0}^{{\alpha _i} - 1} {\left( {\frac{1}{{{g_i}!\beta _i^{{g_i}}}}} \right)} ^{{p_{{g_i}}}}}\left( {{\alpha _i} + {g_i}{p_{{g_i}}}} \right)!{\left( {\frac{{\left( {n + M + 1 - m} \right)}}{{{\beta _i}}}} \right)^{ - \left( {{\alpha _i} + {g_i}{p_{{g_i}}}} \right) - 1}}.
\end{align}
\begin{align}
{\Omega _i} =&\frac{1}{{1 + d_i^\nu }} \frac{1}{{\Gamma \left( {{\alpha _i}} \right)\beta _i^{{\alpha _i}}}}\sum\limits_{n = 0}^{m - 1} {\left( {\begin{array}{*{20}{c}}
{m - 1}\\
n
\end{array}} \right)} {\left( { - 1} \right)^n}\sum\limits_{{p_0} +  \ldots  + {p_{{\alpha _i} - 1}} = n + M - m} {\left( {\begin{array}{*{20}{c}}
{n + M - m}\\
{{p_0}, \ldots ,{p_{{\alpha _i} - 1}}}
\end{array}} \right)} \\\nonumber
&\frac{{M!}}{{\left( {m - 1} \right)!\left( {M - m} \right)!}}
 \times {\prod\limits_{{g_i} = 0}^{{\alpha _i} - 1} {\left( {\frac{1}{{{g_i}!\beta _i^{{g_i}}}}} \right)} ^{{p_{{g_i}}}}}\left( {{\alpha _i} + 1 + {g_i}{p_{{g_i}}}} \right)!{\left( {\frac{{\left( {n + M + 1 - m} \right)}}{{{\beta _i}}}} \right)^{ - \left( {{\alpha _i} + 1 + {g_i}{p_{{g_i}}}} \right) - 1}}.
\end{align}

$\bullet$ Ideal conditions  ($\kappa _{{u_{t,i}}} = \kappa _{{u_{r,i}}} = \kappa _{Rr} = \kappa _{Rt} = 0, 0 \leq i \leq M$)
\begin{align}
C_{k,n}^{id} = \frac{1}{2}\log 2\left( {1 + \frac{{{\Psi _k}{\Psi _n}{a_n}{r_1}{r_2}}}{{\varpi  + {\Psi _k}{r_2} + 1}}} \right),
\end{align}
where $\varpi  = \sum_{i = n + 1}^{M - 1} {{\Psi _k}{\Psi _i}{a_i}{r_1}{r_2}}  + \sum_{i = 1}^M {{\Psi _i}{r_1}{a_i}} $.

\end{theorem}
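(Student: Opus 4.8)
The plan is to derive \eqref{11} from \eqref{10} via the standard ergodic-rate approximation, and then to obtain $C_{k,n}^{id}$ as its zero-impairment special case. Evaluating $C_{k,n}$ in \eqref{10} exactly would require integrating $\frac{1}{2}\log_2(1+\gamma_{k,n})$ against the joint density of the ordered gains $(\rho_1,\dots,\rho_M)$, which is intractable because the SINR in \eqref{9} is a ratio whose numerator $\rho_k\rho_n a_n r_1 r_2$ and denominator $\Theta_1+\Theta_2+\Theta_3+\Theta_4+\Theta_5$ both depend on these correlated order statistics. First I would pull the constant $\frac{1}{2}$ out of the expectation and invoke the widely used approximation
\[
E\!\left[\log_2\!\left(1+\frac{X}{Y}\right)\right]\approx\log_2\!\left(1+\frac{E[X]}{E[Y]}\right),
\]
which is accurate when the numerator and denominator concentrate, thereby replacing $\gamma_{k,n}$ by the ratio of the expectations of its numerator and of its denominator.

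Second, I would carry the expectation through $\gamma_{k,n}$ term by term. The numerator requires $E[\rho_k\rho_n]$, while each $\Theta_j$ contributes either single-gain moments $E[\rho_i]$ and $E[\rho_i^2]$ or cross moments $E[\rho_k\rho_i]$ with $i\neq k$. I would then use the further simplification $E[\rho_k\rho_i]\approx E[\rho_k]\,E[\rho_i]$, i.e.\ treat distinct ordered gains as effectively uncorrelated, which collapses everything onto the two quantities $\Psi_i:=E[\rho_i]$ and $\Omega_i:=E[\rho_i^2]$. Substituting $\rho_k\rho_n\to\Psi_k\Psi_n$ in the numerator and mapping each $\Theta_j$ into $\Xi_j$ by the rules ($\rho_i\to\Psi_i$, $\rho_i^2\to\Omega_i$, $\rho_k\rho_i\to\Psi_k\Psi_i$, $\rho_k\to\Psi_k$) then produces \eqref{11}.

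Third, it remains to evaluate $\Psi_i$ and $\Omega_i$. I would start from the density of the $i$-th smallest of the $M$ i.i.d.\ gains,
\[
f_{\rho_i}(x)=\frac{M!}{(i-1)!\,(M-i)!}\,[F(x)]^{\,i-1}\,[1-F(x)]^{\,M-i}\,f(x),
\]
where $f$ and $F$ are the Nakagami-$m$ squared-envelope (gamma) PDF and CDF adopted from \cite{7870605Yue}, including the $1/(1+d_i^\nu)$ path-loss scaling. Using the finite-sum form of the incomplete gamma function valid for integer fading parameter, I would binomially expand the CDF power, then multinomially expand the resulting power of that finite sum, and integrate term by term via $\int_0^\infty x^s e^{-cx}\,dx=\Gamma(s+1)\,c^{-s-1}$; this yields the closed-form series for $\Psi_i=E[\rho_i]$ and $\Omega_i=E[\rho_i^2]$, the ``$+1$'' shifts in the factorial and exponent of $\Omega_i$ relative to $\Psi_i$ reflecting the extra factor of $x$ in the second-moment integrand.

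Finally, for the ideal case I would set $\kappa_{u_{t,i}}=\kappa_{u_{r,i}}=\kappa_{Rr}=\kappa_{Rt}=0$ in \eqref{11}: this annihilates $\Xi_2$, $\Xi_3$, every $\Omega_i$- and $\kappa$-weighted term of $\Xi_4$, and the $\kappa_{Rt}^2$ term of $\Xi_5$, collapsing the denominator to $\varpi+\Psi_k r_2+1$ and giving $C_{k,n}^{id}$. I expect the main obstacle to be twofold: first, the bookkeeping needed to route the expectation through the many impairment cross-terms ($\kappa_{u_{t,i}}^2\kappa_{Rt}^2$, $\kappa_{Rt}^2\kappa_{Rr}^2$, $\kappa_{u_{r,k}}^2\kappa_{Rr}^2$, and so on) so that each lands in the correct $\Xi_j$; and second, the fact that the two approximation steps --- replacing the mean of the log-ratio by the log of the ratio of means, and $E[\rho_k\rho_i]\approx\Psi_k\Psi_i$ --- are heuristic, so their accuracy would be argued via the subsequent numerical comparison rather than a formal error bound.
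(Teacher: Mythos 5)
Your proposal is correct and follows essentially the same route as the paper, whose entire proof is a one-line sketch ("substituting (5) into (6), utilizing Jensen's inequality, ... after some algebraic manipulations"): the Jensen-type approximation $E[\log_2(1+X/Y)]\approx\log_2(1+E[X]/E[Y])$, the term-by-term replacement of $\rho_i$, $\rho_i^2$, $\rho_k\rho_i$ by $\Psi_i$, $\Omega_i$, $\Psi_k\Psi_i$, and the order-statistic moment computation are exactly the omitted "algebraic manipulations." You in fact supply more detail than the paper does, and your explicit acknowledgement that both the log-of-ratio-of-means step and the decorrelation of ordered gains are heuristics validated numerically is a more honest account than the paper's bare appeal to Jensen's inequality.
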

\begin{proof}
Substituting (5) into (6), utilizing Jensen's inequality, the achievable rate in the presence of non-ideal conditions can be obtained after some algebraic manipulations. For ideal conditions, we just need to set ${\kappa _{{u_{t,i}}}} = {\kappa _{Rt}} = {\kappa _{Rr}} = {\kappa _{{u_{r,k}}}} = 0$ in (7).
\end{proof}

\begin{corollary}
For ideal/non-ideal conditions, we have

$\bullet$ Non-ideal conditions

The ASR of UAV-aided NOMA MWRNs in the presence of non-ideal conditions is given by
\begin{align}
{\Upsilon _{ni}} = \sum\limits_{k = 1}^M {\sum\limits_{n = 1}^{M - 1} {C_{k,n}^{ni}} } .
\end{align}

$\bullet$ Ideal conditions

The ASR of UAV-aided NOMA MWRNs in the presence of ideal conditions is given by
\begin{align}
{\Upsilon _{id}} = \sum\limits_{k = 1}^M {\sum\limits_{n = 1}^{M - 1} {C_{k,n}^{id}} } .
\end{align}
\end{corollary}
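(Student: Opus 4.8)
The plan is to obtain the ASR simply by aggregating the per-user, per-decoding-stage achievable rates already furnished by Theorem~1, so the corollary is essentially a bookkeeping consequence of that theorem together with the two-case SINR dichotomy of Section~\ref{sec2}.

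First I would recall that in the BC phase each user $S_k$ performs SIC and attempts to recover the messages of the remaining $M-1$ users, decoding from the weakest to the strongest channel; the rate harvested at the stage where $S_k$ processes $S_n$ is $C_{k,n}=E[\tfrac{1}{2}\log_2(1+\gamma_{k,n})]$ as in (\ref{10}), where the $\gamma_{k,n}$ and hence the closed forms are the Jensen-based quantities established in Theorem~1. By the dichotomy stated just before (\ref{9}), whenever $n\ge k$ the effective SINR vanishes, so $C_{k,n}=0$, and only the pairs with $n<k$ carry positive rate; for those pairs Theorem~1 gives the explicit value $C_{k,n}^{ni}$ in (\ref{11}) under non-ideal hardware and its degenerate specialization $C_{k,n}^{id}$ under ideal hardware.

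Next I would define the network ASR as the aggregate $\Upsilon=\sum_{k=1}^{M}\sum_{n}C_{k,n}$ over all admissible $(k,n)$, and then observe that the inner index may be harmlessly extended to $n=1,\dots,M-1$ because the extra $n\ge k$ terms are identically zero; this yields exactly the compact double sum in the statement. Substituting the two expressions of Theorem~1 then gives $\Upsilon_{ni}$ and $\Upsilon_{id}$ respectively. The only step requiring any care — the main, though minor, obstacle — is verifying that after discarding the vanishing terms the index set $\{(k,n):n<k\}$ coincides precisely with the set of (receiver, decoded-message) pairs actually realized by the NOMA multi-way exchange, so that every mutual-information transfer is counted once, with none omitted or double-counted; the remainder is a direct substitution and linearity of summation.
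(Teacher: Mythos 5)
Your proposal is correct and matches the paper's (implicit) reasoning: the paper offers no separate proof of this corollary, treating it as the direct definition of the ASR as the double sum of the per-pair rates from Theorem~1, with the $n \ge k$ terms contributing zero exactly as noted in Section~\ref{sec2} (``the ASR of this case is equal to zero, which is omitted in the sequel''). Your extra care about the index set $\{(k,n): n<k\}$ versus the full range $n=1,\dots,M-1$ is the same bookkeeping the paper glosses over, so nothing further is needed.
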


\subsection{High SNR Analysis}
The high SNR region refers to ${r_1} \to \infty $ and ${r_2} = c r_1 $, where $c$ is a fixed constant. The asymptotic analysis is provided in the following corollaries.
\begin{corollary}
For ideal/non-ideal conditions, we have

$\bullet$ Non-ideal conditions
\begin{align}
C_{k,n}^{ni,\infty } = \frac{1}{2}\log 2\left( {1 + \frac{{{\Psi _k}{\Psi _n}{a_n}}}{{{\Delta _1} + {\Delta _2} + {\Delta _3} + {\Delta _4}}}} \right),
\end{align}
where $~~~{\Delta _1} ~=~ \sum_{i = n + 1}^{M - 1}~~ {{\Psi _k}~~{\Psi _i}~~{a_i}} $, $~~~{\Delta _2} ~=~ \sum\limits_{i = 1}^M ~~{{\Psi _k}~~{\Psi _i}~~{a_i}~~\left( {\kappa _{{u_{t,i}}}^2 ~~+~~ \kappa _{Rr}^2} \right)} $, \\${\Delta _3} = \sum_{i = 1}^M {{\Psi _k}{\Psi _i}\kappa _{Rt}^2\left( {{a_i} + {a_i}\kappa _{{u_{t,i}}}^2{\rm{ + }}{a_i}\kappa _{Rr}^2} \right)} $, ${\Delta _4} = \sum_{i = 1}^M {{\Omega _i}\kappa _{{u_{r,k}}}^2\left( {{a_i} + {a_i}\kappa _{{u_{t,i}}}^2{\rm{ + }}{a_i}\kappa _{Rr}^2} \right)} $.

$\bullet$ Ideal conditions
\begin{align}
C_{k,n}^{id,\infty } = \frac{1}{2}\log 2\left( {1 + \frac{{{a_n}{\Psi _n}}}{{\sum\limits_{i = n + 1}^{M - 1} {{a_i}{\Psi _i}} }}} \right).
\end{align}
\end{corollary}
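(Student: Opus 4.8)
The plan is to start from the closed-form rate expressions in Theorem~1 and pass to the limit $r_1 \to \infty$ with $r_2 = c r_1$ by a term-by-term order analysis of the SINR. The key observation is that the numerator $\Psi_k \Psi_n a_n r_1 r_2$ grows quadratically in $r_1$, so after dividing numerator and denominator by $r_1 r_2$ every term that is only linear in $r_1$ or $r_2$ (or a constant) vanishes, and only the terms proportional to $r_1 r_2$ survive in the limit.

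First I would handle the non-ideal case. Dividing the SINR in (\ref{11}) by $r_1 r_2$, the numerator becomes $\Psi_k \Psi_n a_n$. In the denominator, $\Xi_1$, $\Xi_2$ and $\Xi_3$ are homogeneous of degree $r_1 r_2$ and converge, respectively, to $\Delta_1$, $\Delta_2$ (after grouping the $\kappa_{u_{t,i}}^2$ and $\kappa_{Rr}^2$ contributions of $\Xi_2$) and $\Delta_3$. Inside $\Xi_4$, only the two sums carrying the factor $r_1 r_2$ persist and yield $\Delta_4$, whereas the sums $\sum_i \Psi_i r_1(a_i + a_i \kappa_{u_{t,i}}^2)$ and $\sum_i \Psi_i \kappa_{Rr}^2 a_i r_1$ are $O(r_1) = o(r_1 r_2)$ and drop out; similarly every summand of $\Xi_5$ is $O(r_2)$ or $O(1)$, so $\Xi_5/(r_1 r_2) \to 0$. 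Collecting the surviving terms gives precisely $C_{k,n}^{ni,\infty}$ as stated.

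For the ideal case I would apply the same normalization to the ideal-condition expression of Theorem~1. The numerator again tends to $\Psi_k \Psi_n a_n$, the term $\sum_{i=n+1}^{M-1} \Psi_k \Psi_i a_i r_1 r_2$ of $\varpi$ tends to $\sum_{i=n+1}^{M-1} \Psi_k \Psi_i a_i$, while $\sum_i \Psi_i r_1 a_i$, $\Psi_k r_2$ and the constant $1$ are all $o(r_1 r_2)$ and disappear. Cancelling the common factor $\Psi_k$ from numerator and denominator leaves $\tfrac{1}{2}\log_2\big(1 + a_n \Psi_n / \sum_{i=n+1}^{M-1} a_i \Psi_i\big)$, which is the claimed $C_{k,n}^{id,\infty}$.

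The only real subtlety is bookkeeping rather than analysis: inside $\Xi_4$ and $\Xi_5$ one must carefully separate the hardware-impairment cross terms that inherit the full $r_1 r_2$ scaling from the ``noise-floor'' terms carrying only a single SNR factor, since only the former contribute to the asymptote. A convenient consistency check is that setting all $\kappa$'s to zero forces $\Delta_2 = \Delta_3 = \Delta_4 = 0$, so the non-ideal asymptote collapses exactly onto the ideal one.
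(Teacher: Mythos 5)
Your proposal is correct and follows exactly the derivation the paper intends (the paper states this corollary without an explicit proof, but the structure of each $\Delta_j$ as the coefficient of $r_1 r_2$ in the corresponding $\Xi_j$ confirms the dominant-term extraction you describe): normalizing the SINR of Theorem~1 by $r_1 r_2$ and discarding the $O(r_1)$, $O(r_2)$ and $O(1)$ terms yields precisely $\Delta_1,\dots,\Delta_4$ in the non-ideal case and the stated ratio in the ideal case. Your consistency check that setting all $\kappa$'s to zero collapses the non-ideal asymptote onto the ideal one is a nice confirmation.
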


\begin{corollary}
The total achievable sum rate in the high SNR region can be expressed.

$\bullet$ Non-ideal conditions
\begin{align}
\Upsilon _{ni}^\infty  = \sum\limits_{k = 1}^M {\sum\limits_{n = 1}^{M - 1} {C_{k,n}^{ni,\infty }} },
\end{align}

$\bullet$ Ideal conditions
\begin{align}
\Upsilon _{id}^\infty  = \sum\limits_{k = 1}^M {\sum\limits_{n = 1}^{M - 1} {C_{k,n}^{id,\infty }} } .
\end{align}
\end{corollary}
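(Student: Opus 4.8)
The plan is to obtain the high-SNR ASR expressions directly from the ASR definitions in Corollary 1 combined with the per-link asymptotic rates of Corollary 2, exploiting the fact that the outer summation is finite so that the high-SNR limit can be passed inside it.

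First I would recall from Corollary 1 that, under non-ideal conditions, $\Upsilon_{ni} = \sum_{k=1}^{M}\sum_{n=1}^{M-1} C_{k,n}^{ni}$, a finite double sum of the closed-form rates given in Theorem 1. Fixing $r_2 = c\,r_1$ with $c$ a positive constant, each summand becomes a function of the single variable $r_1$: the SINR in $C_{k,n}^{ni}$ is a ratio whose numerator equals $c\,\Psi_k\Psi_n a_n r_1^2$ and whose denominator, obtained by collecting $\Xi_1$ through $\Xi_5$, is a degree-two polynomial in $r_1$. The coefficient of the $r_1 r_2$ (degree-two) term comes from $\Xi_1$, $\Xi_2$, $\Xi_3$ and the $r_1 r_2$-proportional part of $\Xi_4$, whereas the remaining contributions in $\Xi_4$ and $\Xi_5$ scale only as $r_1$, as $r_2$, or as a constant, and are therefore of strictly lower order.

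Next I would divide both numerator and denominator of the SINR by $r_1 r_2 = c\,r_1^2$ and let $r_1 \to \infty$. All lower-order terms vanish, and what survives is exactly $\Psi_k\Psi_n a_n / (\Delta_1 + \Delta_2 + \Delta_3 + \Delta_4)$, i.e. $C_{k,n}^{ni,\infty}$ of Corollary 2; setting $\kappa_{u_{t,i}} = \kappa_{u_{r,i}} = \kappa_{Rr} = \kappa_{Rt} = 0$ in the same computation leaves $a_n\Psi_n / \sum_{i=n+1}^{M-1} a_i\Psi_i = C_{k,n}^{id,\infty}$. Because the double sum contains only $M(M-1)$ terms, the limit commutes with the summation, giving $\Upsilon_{ni}^\infty = \lim_{r_1 \to \infty}\sum_{k}\sum_{n} C_{k,n}^{ni} = \sum_{k}\sum_{n} C_{k,n}^{ni,\infty}$ and, identically, $\Upsilon_{id}^\infty = \sum_{k}\sum_{n} C_{k,n}^{id,\infty}$, which are the claimed identities.

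The main obstacle is purely the bookkeeping inside the denominator: one must check that every distortion term carrying $\kappa_{u_{t,i}}^2$, $\kappa_{Rr}^2$, $\kappa_{Rt}^2$ or $\kappa_{u_{r,k}}^2$ is correctly sorted into either the $O(r_1 r_2)$ group (hence retained in $\Delta_2$, $\Delta_3$, $\Delta_4$) or the lower-order group, and in particular that the pure-noise pieces $\Psi_i r_1 a_i$, $\kappa_{u_{r,k}}^2 \Psi_i r_2$, $\Psi_k r_2$, $\Psi_k\kappa_{Rt}^2 r_2$ and the additive $1$ all disappear in the limit. Everything else follows immediately from the finiteness of the sum and from Corollary 2.
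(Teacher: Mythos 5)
Your proposal is correct and matches the paper's (implicit) reasoning: the paper offers no separate proof of this corollary because it follows immediately by summing the per-link asymptotic rates of Corollary 2 over the finite index set, exactly as you argue via interchanging the limit with the finite double sum. The extra bookkeeping you do in re-deriving $C_{k,n}^{ni,\infty}$ and $C_{k,n}^{id,\infty}$ is really a re-proof of Corollary 2 and is not needed here, but it is harmless.
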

\begin{remark}
From Corollary 1 to Corollary 3, the analytical expressions for the ASRs of UAV-aided NOMA MWRNs and the asymptotic behaviors in the high SNR regimes are obtained.
\end{remark}

To further obtain ASR performance in the high SNR region, we focus on the \emph{high SNR slope} and \emph{high SNR power offset}. According to \cite{X.Li2019}, the asymptotic achievable rate can be formulated in a general form as
\begin{align}
C_{sum}^\infty  = {\mathcal{S}_\infty }\left( {{{\log }_2}\ r  - {\mathcal{L}_\infty }} \right) +  \circ \left( 1 \right),
\end{align}
where $\circ(\cdot)$ represents the infinitesimal of lower order of x, $r \in \left\{ {{r_1},{r_2}} \right\}$ is the average SNR, ${{\mathcal{S}_\infty }}$ and ${\mathcal{L}_\infty }$ are the high SNR slope in bits/s/HZ (3 dB) and the high SNR power offset (3dB) units, respectively. Based on \cite{X.Li2019}, we can obtain
\begin{align}
{\mathcal{S}_\infty }{\rm{ = }}\mathop {\lim }\limits_{\ r  \to \infty } \frac{{C_{sum}^\infty }}{{{{\log }_2}\ r }},
\end{align}
\begin{align}
{\mathcal{L}_\infty }{\rm{ = }}\mathop {\lim }\limits_{\ r  \to \infty } \left( {{{\log }_2}\ r  - \frac{{C_{sum}^\infty }}{{{S_\infty }}}} \right).
\end{align}
Based on the above definitions, the \emph{high SNR slope} and the \emph{high SNR power offset} can be evaluated in the following corollary.
\begin{corollary}
The high SNR slope and the high SNR power offset are obtained in two condition:

$\bullet$ Non-ideal conditions
\begin{align}
{\mathcal{S}_ {ni}^\infty } = 0,{\mathcal{L}_{ni}^\infty } = \infty.
\end{align}

$\bullet$ Ideal conditions
\begin{align}
{\mathcal{S}_{id}^\infty } = 0,{\mathcal{L}_{id}^\infty } = \infty.
\end{align}
\end{corollary}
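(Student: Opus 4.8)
The plan is to obtain this corollary as an immediate consequence of the asymptotic rate formulas of Corollary~3, so that the only genuine work is evaluating the two limits that define the high SNR slope and the high SNR power offset. First I would isolate the structural fact that drives everything: in the regime $r_1\to\infty$, $r_2=c\,r_1$, the exact SINR $\gamma_{k,n}$ of Section~\ref{sec2} loses all dependence on $r_1$. Indeed, the numerator $\rho_k\rho_n a_n r_1 r_2$ and the $r_1 r_2$-order terms of the denominator — namely $\Theta_1$, $\Theta_2$, $\Theta_3$ and the $r_1 r_2$-part of $\Theta_4$ — all grow like $c\,r_1^2$, whereas the remaining contributions (the single-SNR terms of $\Theta_4$ and all of $\Theta_5$) are $O(r_1)$ or $O(1)$ and therefore drop out after dividing numerator and denominator by $r_1^2$. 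This is precisely why $C_{k,n}^{ni,\infty}$ and $C_{k,n}^{id,\infty}$ in Corollary~2 carry no $r_1$ or $r_2$, and hence why $\Upsilon_{ni}^\infty=\sum_{k}\sum_{n}C_{k,n}^{ni,\infty}$ and $\Upsilon_{id}^\infty$ are finite, strictly positive constants; write $C_{sum}^\infty$ for either of them.

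Given this, the slope is immediate. Substituting $C_{sum}^\infty=\mathrm{const}$ into the defining limit yields
\[
\mathcal{S}_\infty=\lim_{r\to\infty}\frac{C_{sum}^\infty}{\log_2 r}=0,
\]
and since the argument only uses that the asymptotic sum rate is bounded, it applies verbatim in both the ideal and the non-ideal case, so $\mathcal{S}_{ni}^\infty=\mathcal{S}_{id}^\infty=0$.

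For the power offset I would then feed $\mathcal{S}_\infty=0$ and $C_{sum}^\infty>0$ into its definition: the quotient $C_{sum}^\infty/\mathcal{S}_\infty$ diverges, so $\mathcal{L}_\infty=\lim_{r\to\infty}\bigl(\log_2 r-C_{sum}^\infty/\mathcal{S}_\infty\bigr)=\infty$; equivalently, in the affine expansion $C_{sum}^\infty=\mathcal{S}_\infty(\log_2 r-\mathcal{L}_\infty)+\circ(1)$ the left-hand side is a finite, nonzero interference-limited ceiling, which can be reconciled with a vanishing pre-log slope only if the power offset is infinite. I expect this last step to be the single delicate point, since with $\mathcal{S}_\infty=0$ the defining expression is the indeterminate form ``$\log_2 r-\mathrm{const}/0$''; the clean way to phrase the conclusion is the operational one — the rate saturates at $C_{sum}^\infty$, so no scaling of the transmit power ever restores logarithmic growth, which is exactly the statement $\mathcal{L}_\infty=\infty$. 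The ideal case is word-for-word the same, with the RHI parameters affecting only the numerical value of the ceiling and not its (zeroth) order, so both bullets of the corollary are established at once.
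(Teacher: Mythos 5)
Your proposal is correct and follows essentially the same route as the paper, which offers no formal proof of this corollary beyond Remark 2's observation that the interference-induced error floor makes the asymptotic sum rate a finite constant, whence the slope limit is $0$ and the offset is $\infty$ by the definitions in (20)--(21). Your added care on the two points the paper glosses over --- verifying from the $\Theta_i$ terms that the SINR genuinely saturates under $r_2 = c\,r_1$, and flagging that the offset definition becomes an indeterminate form when $\mathcal{S}_\infty = 0$ so that $\mathcal{L}_\infty = \infty$ must be read as the conventional statement that no power offset restores logarithmic growth --- is a strict improvement, not a deviation.
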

\begin{remark}
From Corollary 4, one can obtain that the high SNR slope and high SNR power offset in two conditions equal to zero and infinity, respectively. This happens due to the fact that an error floor exists in high SNR regions owing to the interference from other users.
\end{remark}

\begin{figure}[t!]
\centering
\subfigure[{OMA/NOMA}]{
\begin{minipage}[t]{0.46\linewidth}
\centering
\includegraphics[width= 1.9in, height=2.3in]{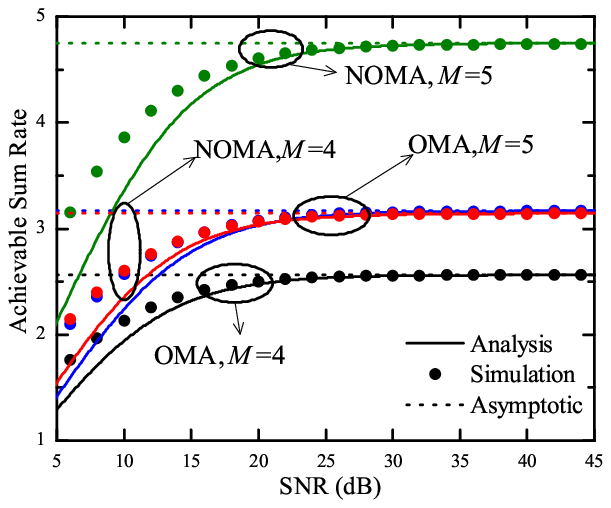}
\end{minipage}
}
\subfigure[{Ideal or non-ideal conditions}]{
\begin{minipage}[t]{0.465\linewidth}
\centering
\includegraphics[width= 1.8in, height=2.3in]{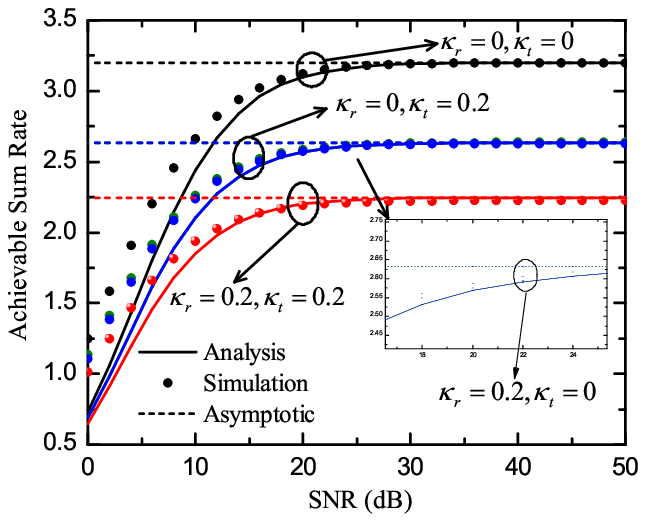}
\end{minipage}
}
\caption{The achievable sum rate of users.}
\end{figure}
\section{Numerical Results \& Discussion}\label{sec5}
In this section, indicative numerical results are provided to verify the correctness of our theoretical analysis and verified by using Monte Carlo simulations with $10^5$ trials. Unless otherwise stated, we use the following parameter settings \cite{6630485Bjornson}: $M = 3$, ${\sigma _r} = {\sigma _{t, k}} = 1$, ${\kappa _{{u_{t,i}}}} = {\kappa _{Rt}} = {\kappa _{Rr}} = {\kappa _{{u_{r,k}}}} = 0$, $d_i = 1$\footnote{This distance is the normalized distance.}, $\nu  = 3$, ${a_1} = 0.5,{a_2} = 0.3,{a_3} = 0.2$ and ${\alpha _i} = 2,{\beta _i} = 3$, $P = nP_R$, where $n$ is a fixed constant.

Fig. 2(a) plots the ASR versus the average SNR for difference number of users $M = \{4, 5\}$. For $M = 4$, the power allocation coefficients are set to ${a_1} = 0.5$, ${a_2} = 0.3$, ${a_3} = 0.15$, ${a_4} = 0.05$. For $M = 5$, the power allocation coefficients are set to ${a_1} = 0.5$, ${a_2} = 0.2$, ${a_3} = 0.15$, ${a_4} = 0.1$, ${a_5} = 0.05$. Moreover, for comparsion purpose, the curves of ASR with OMA are also provided \cite{C.D.Ho2018}. From Fig. 2(a), we can observe that the ASR in both NOMA and OMA grows larger when the number of users increases. This is because that the ASR is contributed by all users. Another observation is that the ASR of NOMA higher than that of OMA. The reason is that the proposed system can reduce time-slot to two from $\left[ {\left( {M - 1} \right)/2} \right] + 1$ in OMA.  Finally, the ASR at high SNRs approaches to a fixed constant due to the large interference, which illustrates the Remark 2 we have obtained.

\begin{figure}[!t]
\setlength{\abovecaptionskip}{0pt}
\centering
\includegraphics [width=3.7in]{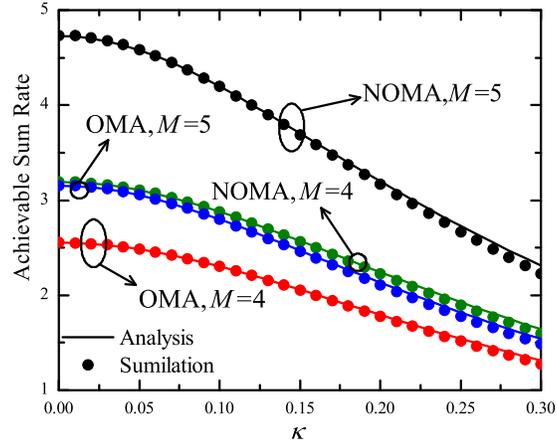}
\caption{The achievable sum rate vs. $\kappa$ for OMA/NOMA schemes.}
\label{fig4}
\end{figure}

\begin{figure}[t!]
\centering
\subfigure[{Different SNR}]{
\begin{minipage}[t]{0.46\linewidth}
\centering
\includegraphics[width= 1.8in, height=2.3in]{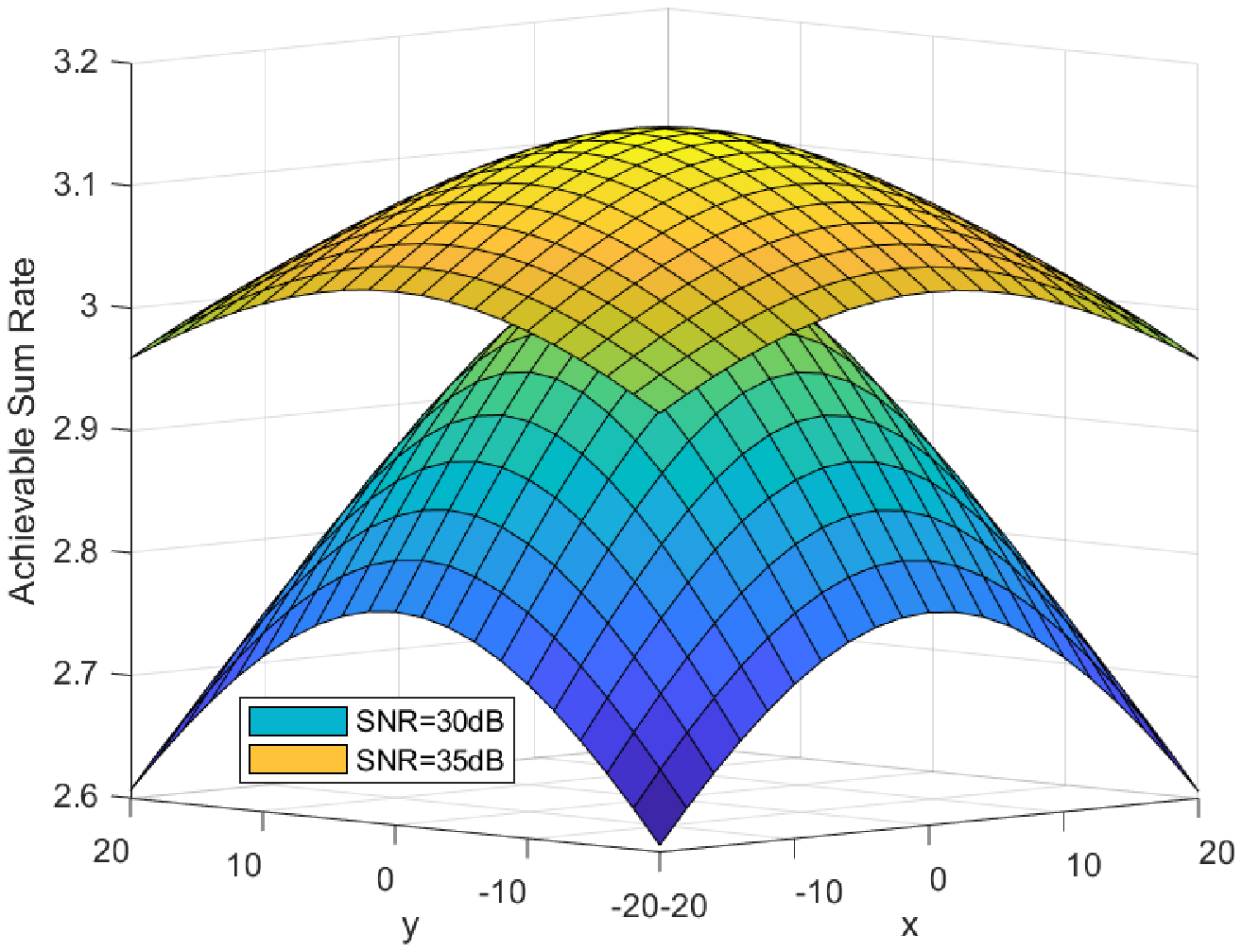}
\end{minipage}
}
\subfigure[{OMA/NOMA}]{
\begin{minipage}[t]{0.46\linewidth}
\centering
\includegraphics[width= 1.8in, height=2.3in]{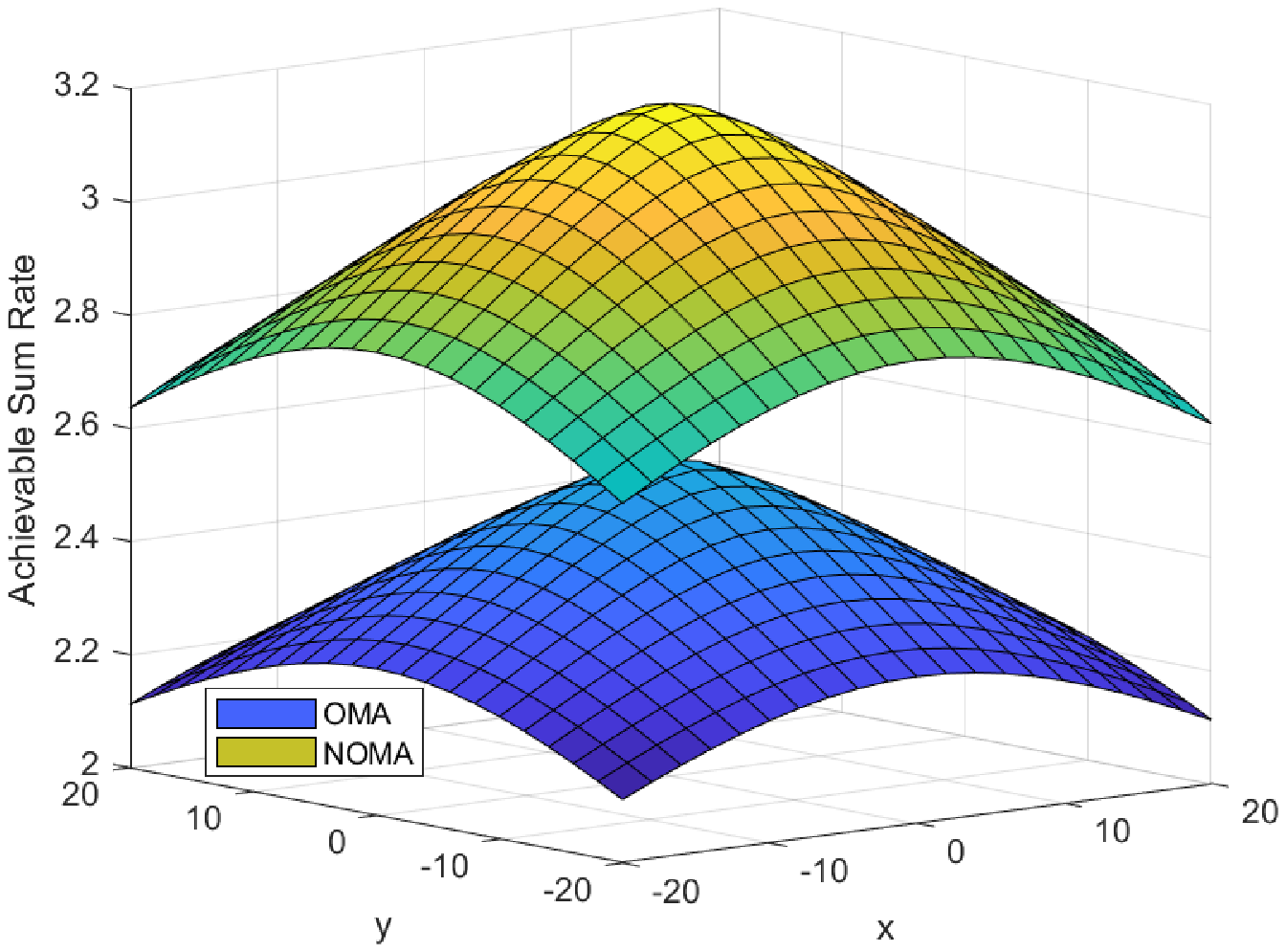}
\end{minipage}
}
\caption{The achievable sum rate of users.}
\end{figure}
Fig. 2(b) depicts the ASR versus the average SNR under ideal and non-ideal conditions with four users. To characterize the effects of RHIs at the transmitter, receiver and transceiver, we consider three types of parameter settings: 1) The transmitter RHIs, (${\kappa _{{u_{t,i}}}} = {\kappa _{Rt}} =0, {\kappa _{Rr}} = {\kappa _{{u_{r,k}}}} = 0.2$); 2) The received RHIs, (${\kappa _{{u_{t,i}}}} = {\kappa _{Rt}} = 0.2,{\kappa _{Rr}} = {\kappa _{{u_{r,k}}}} = 0$); and 3) The transceiver RHIs, (${\kappa _{{u_{t,i}}}} = {\kappa _{Rt}} = {\kappa _{Rr}} = {\kappa _{{u_{r,k}}}} = 0.2$). From Fig. 2(b), it can be observed that the ASR under ideal conditions outperforms non-ideal conditions. More significantly, the gap of the ASR under the condition of transmitter RHIs and receiver RHIs is almost neglectful. This means that they have the same effects on the ASR. One can also observe that RHIs at transceiver has serious effects on the ASR.

Fig. 3 illustrates the ASR variation trend versus level of RHIs for different number of user $M = \{4, 5\}$. In this simulation, we assume SNR is 30dB. Similarly, the curves of ASR under case of OMA are provided. As can be seen from Fig. 3 that all the ASRs become smaller as the value of parameters of RHIs $\kappa$ grows large. This means that the ASR of considered sytems is limited by RHIs. In addition, the gap of ASR between NOMA and OMA becomes large as the number of users increases.

Fig. 4(a) shows the achievable sum rate of NOMA users versus the location of users when the SNR is set to 30dB and 35dB. The  height of UAV is fixed at 10m and the numbers of users are 4 in this case. One is obvious that the distance from the UAV to the ground NOMA users can affect the magnitude of the ASR. It is also worth to mention that when $\left( {x,y} \right) \to \left( {0,0} \right)$, the ASR of the NOMA users grows large due to the small average distance between the UAV and the served NOMA users. This means that we can deploy the UAV to maximize the ASR of the considered networks. In addition, the ASR when SNR is 35dB is greater than when SNR is 30dB. Fig. 4(b) describes the achievable sum rate of NOMA users and OMA users versus the location of users. The SNR is set to 30dB and the numbers of users are 4 as in Fig. 4(b). Obviously, the ASR of UAV-aided NOMA MWRN is better than that of OMA due to the short communication time.

\section{Conclusion}
This paper studied the ASR of UAV-aided NOMA MWRNs over Nakagami-$m$ fading in the presence of RHIs at transceivers. Specifically, we derived the analytical closed-form expression for the ASR and the asymptotic analysis in the high SNR region. In addition, the \emph{high SNR slope} and \emph{high SNR power offset} were discussed. The simulation results showed that the system performance of the considered networks outperforms traditional OMA MWRNs. In addition, we concluded that the ASR of the UAV-Aided NOMA MWRNs is limited by RHIs at transceiver. Finally, we identified that UAV can be flexible deploying to optimize ASR performance.


\bibliographystyle{IEEEtran}
\bibliography{mybib}
\end{document}